\providecommand{\gen}[1]{\langle#1\rangle}
\providecommand{\abs}[1]{\left|#1\right|} 
\providecommand{\floor}[1]{\left\lfloor#1\right\rfloor} 
 \providecommand{\F}{\mathbb{F}}
\DeclareMathOperator{\spam}{span}
\def\blfootnote{\xdef\@thefnmark{*}\@footnotetext}
\title{List decoding subspace codes from insertions and deletions*
\blfootnote{A conference version of this paper appeared at the 3rd Innovations in Theoretical Computer Science (ITCS) conference, January 2012.\\Research supported in part by NSF grants CCF-0963975 and CCF-0953155, and the MSR-CMU Center for Computational Thinking. Any opinions, findings, and conclusions or recommendations expressed in this material are those of the authors and do not necessarily reflect the views of the National Science Foundation.}}
\author{Venkatesan Guruswami \and Srivatsan Narayanan \and Carol Wang}
\date{Computer Science Department \\ Carnegie Mellon University \\ Pittsburgh, PA 15213}
\newtheorem{defn}{Definition}
\newtheorem{lemma}{Lemma}
\newtheorem{thm}[lemma]{Theorem}
\theoremstyle{definition}
\newtheorem{rmk}{Remark}
\begin{document}
\begin{titlepage}
\maketitle
\thispagestyle{empty}
\begin{abstract}
We present a construction of subspace codes along with an efficient algorithm for list decoding from both insertions and deletions, handling an information-theoretically maximum fraction of these with polynomially small rate.  Our construction is based on a variant of the folded Reed-Solomon codes in the world of linearized polynomials, and the algorithm is inspired by the recent linear-algebraic approach to list decoding~\cite{frs-lin-alg}. Ours is the {\em first} list decoding algorithm for subspace codes that can handle deletions; even one deletion can totally distort the structure of the basis of a subspace and is thus challenging to handle. When there are only insertions, we also present results for list decoding subspace codes that are the linearized analog of Reed-Solomon codes (proposed in \cite{WXS,KK}, and closely related to the Gabidulin codes for rank-metric), obtaining some improvements over similar results in \cite{MV}.
\end{abstract}
\end{titlepage}

\section{Introduction}
This paper addresses the problem of list-decoding \emph{subspace codes}. A subspace code is a collection $\mathcal{C}$ of subspaces of $\F_q^n$. Here, we concern ourselves with constant-dimension codes, where each subspace has some dimension $\ell<n$. In this case, we can define the rate of $\mathcal{C}$ as $R(\mathcal{C})=\log\abs{C}/n\ell$, where $q^{n\ell}$ is (approximately) the number of $\ell$-dimensional subspaces of $\F_q^n$. 
The distance of two subspaces $U,V$ will be $\dim(U) + \dim(V)-2\dim(U\cap V)$, which can be thought of as the information each ``adds'' to the other. 

Subspace codes were introduced by \cite{WXS} as so-called linear authentication codes, which can be used for distributed authentication. 
In \cite{KK}, the authors show that subspace codes can also be applied to the problem of handling errors in \emph{network coding}. 
\smallskip

\noindent \textbf{Subspace codes and network coding.} In linear network coding, messages are sent from sources to sinks through a flow network. Intermediate nodes transmit linear combinations of all received messages. It is well known that network coding can outperform routing in some networks, and an important result in network coding is that using \emph{random} linear combinations performs well. However, standard approaches to random coding are vulnerable to errors in transmission; a single corrupt packet can affect all other messages. 

To make random coding more robust to errors, \cite{KK} initiates a study of codes which can be applied at sources before transmission. 
As a convenient abstraction for designing relevant codes, they introduce the \emph{operator channel} for subspace codes. 

The operator channel, defined formally in Definition~\ref{def:operator-channel}, models the effect of a random network code in the prescence of errors. 
The input and output alphabets are subspaces over $\F_q$, reflecting the fact that random linear combinations of messages preserves only their span. In transmission, two kinds of errors may occur: \emph{insertions}, thought of as an injected packet, and \emph{deletions}, thought of as a lost packet. \footnote{Insertions and deletions are referred to as errors and erasures, respectively, in \cite{KK}; we have renamed them to clarify the kinds of changes introduced.}

In (uniquely) decoding for this channel, we ask that a message be recovered as long as the received subspace is not too far from the message subspace, with distance measured as before. 
Therefore, by sending a basis for the message subspace through the network, a good code for the operator channel can be combined with random network coding to allow information transmission even if the network is faulty. 
\smallskip

\noindent \textbf{K\"otter-Kschischang codes.} In addition to proving a version of the singleton bound for subspace codes, \cite{KK} gives an explicit family of codes which nearly achieves this bound. As in \cite{MV}, we will refer to these codes as KK codes. This construction, like traditional Reed-Solomon codes, sends the evaluations of polynomials. A key difference is that the message polynomials are \emph{linearized}. 

More specifically, the KK code encodes a linearized polynomial $f$ by the span of $\{\bigl(a_i, f(a_i)\bigr)\mid i\in [\ell]\}$ for linearly independent $a_1,\dotsc, a_\ell$. This can be thought of as a ``basis independent'' version of Gabidulin's construction of maximum rank-distance codes (\cite{gabidulin}), in which codewords are matrices in $\F_q^{\ell\times m}$ whose $i$th row is $f(a_i)$ in some fixed basis. 

As in the case of Gabidulin codes (\cite{loidreau,SK-gabidulin}), the authors of \cite{KK} show that a variant of the Welch-Berlekamp algorithm for decoding Reed-Solomon codes can be used to decode KK codes up to half the minimum distance. 
\smallskip

\noindent \textbf{Subspace codes and rank-metric codes.} The connection between KK codes and Gabidulin's construction is not a coincidence: in \cite{SKK}, a general ``lifting'' method for constructing subspace codes from rank-metric codes is given. The decoding problem for subspace codes can then be interpreted as a (modified) decoding problem for rank-metric codes. 

In general, this decoding problem, which uses side information, seems to be more difficult than the standard decoding problem. Further, to our knowledge, analogous results to \cite{SKK}  are not known for the list decoding setting. 
In this paper, we will only consider subspace codes. 
\smallskip

\noindent \textbf{List decoding subspace codes.} Because unique decoding may fail if errors occur beyond half the minimum distance of the code, it is natural to ask whether one can list decode beyond this radius. We formally define list decoding for this setting in Definition~\ref{def:list-decoding-subspace}; informally, the goal is to find all message subspaces ``near'' the received subspace. Natural extensions of Reed-Solomon list decoding to KK codes has not been successful, so we and others focus on designing new codes. 
\smallskip

We now describe previous work toward list decoding subspace codes, and give an informal description of our results. 

\subsection{Previous work}

Towards the goal of list decoding subspace codes, Mahdavifar and Vardy~\cite{MV} considered a (non-linear) variant of the KK codes, drawing inspiration from a variant of Reed-Solomon codes defined by Parvaresh and Vardy~\cite{PV-focs05}, and gave a list decoding algorithm for these codes. However, for fundamental reasons, the algorithm could {\em only handle insertions}.
To illustrate the basic challenge posed by deletions, note that although the input subspace $V$ is transmitted using bases, any special structure used to generate a basis for $V$ may be lost with even one deletion. For example, if $\{\alpha_i\}_{i=1}^\ell$ is a basis for $V$, the received space $U=\gen{\alpha_1+\alpha_i}_{i>1}$ which arises from one deletion and no insertions no longer contains any of the original $\alpha_i$. This is one of the challenges in designing codes for this model.  Also, if the code is linear, then decoding from insertions alone can be done by simply solving a linear system (see Remark~\ref{rem:list-size-bad-linear} for a related point on the limitation of linear codes in terms of list size).

The parameter trade-offs obtained by \cite{MV} are a bit complicated to describe, but the main trade-off is that they can handle $t = \tau \ell$ insertions for an insertion ``fraction" $\tau < L$ with list-size $L$ and rate $R\ll 1/L^2$.
They also present a variant of KK codes which they can list decode
from a $s+1-\left(\frac{q^{s+1}-1}{q-1}\right)(1+\ell/m) R$ insertion fraction with list size $q^s$ and rate $R<1/q^s$ (where $q$ grows with the parameter $\ell$). 

\subsection{Our contributions}

We initiate a study of list decoding subspace codes from a combination of both insertions and deletions. We first understand the trade-offs that might be possible in this setting, by analyzing the list decoding of random subspace codes. This result  shows that up to $\rho \ell$ deletions can be handled for any $\rho < 1$ when the list size is a large enough constant $L$ and the ``fraction" of insertions $\tau = t/\ell$ is less than (approximately) $L(1-\rho)$ (see Theorem~\ref{thm:random-subspace-ld} for the formal statement).

Our main result is a construction of subspace codes and a list decoding algorithm for it that can handle a combination of both insertions and deletions. Furthermore, we can decode under similar constraints on the number of insertions and deletions as our random coding result, though our rate is polynomially small (and the list size a much larger constant). Formally, for any integer $s \ge 1$, we can list decode from an insertion fraction $\tau$ and deletion fraction $\rho$ with list size $q^{s-1}$ provided $\tau + s\rho < s(1-o(1))$ (formal statement in Theorem~\ref{thm:main-lfrs}). One might draw a parallel of this to the situation after the early results on list decoding, for instance the Goldreich-Levin list decoding algorithm for Hadamard codes~\cite{GL89} and Sudan's algorithm for list decoding Reed-Solomon codes~\cite{sudan}, which were able to correct from a maximal fraction of errors (approaching $1/2$ for binary codes and $1$ for codes over large alphabets) but had sub-optimal rate. 

Our code construction is the counterpart of folded Reed-Solomon codes, which were shown to achieve the optimal rate vs. error-correction radius trade-off for (conventional) list decoding~\cite{GR-FRS}, in the world of linearized polynomials. Accordingly, we call the codes {\em linearized folded Reed-Solomon} codes. The decoding algorithm is linear-algebraic, and inspired by the recent approach for list decoding folded Reed-Solomon and derivative codes~\cite{frs-lin-alg,GW-derivative-codes}.

We also show how the ideas of our decoding algorithm can be applied to other codes in the case that no deletions have occurred. We show in this setting that a restricted version of KK codes (where the coefficients of the message polynomial are taken from the base field $\F_q$) can be list-decoded from a $s+1-(s+1)(1+\ell)R$ insertion fraction with list size $q^s$. 

We then address the same variant of KK codes defined in \cite{MV} and show that it can be list-decoded from a $s+1-(s+1)(1+\ell/m) R$ insertion fraction with list size $q^s$. In addition to improving on the parameters shown in \cite{MV}, we are able to handle a wider range of rates. 

\subsection{Comparison with previous work}

One drawback for both the codes presented in \cite{MV} and the codes described here is that the message coefficients are always taken from the base field $\F_q$ (whereas the Gabidulin and KK codes use coefficients from the full field $\F_{q^m}$). This leads loss of a factor of $m$ in the rate for the restricted KK codes and our folded code. The paper \cite{MV} is able to reverse the loss in rate, but at the cost of not being able to correct deletions. In both cases, taking codewords from the full field $\F_{q^m}$ leads to an increase in the (provable) list size bound. 

The paper \cite{MV} is able to increase the rate up to a small constant by choosing special bases from a larger field, effectively allowing the dimension of the transmitted space to decrease by a factor of $m$. However, as noted before, this means that even one deletion can compromise the decoding procedure. Although our rate is smaller, we are able to handle deletions and a comparable number of insertions.

\section{Preliminaries}
For a vector space $W$, let $\mathcal{P}(W)$ denote the set of all subspaces of $W$, and $\mathcal{P}_\ell(W)$ the set of all $\ell$-dimensional subspaces of $W$. 

\subsection{Rate of a subspace code}

The rate of a subspace code is defined to capture the amount of information conveyed by a codeword as a fraction of the amount of information conveyed by an arbitrary $\ell$-dimensional subspace. Formally,

\begin{defn}[Rate of a subspace code]
The rate $R(\mathcal{C}) \in [0,1]$ of a subspace code $\mathcal{C} \subseteq \mathcal{P}_\ell(\F_q^n)$ is defined as 
\[ R(\mathcal{C}) = \frac{\log_q |\mathcal{C}|}{ n \ell } \ . \]
\end{defn}

\subsection{The operator channel}

We recall the definition of the operator channel from \cite{KK}. 

\begin{defn}
\label{def:operator-channel}
 An operator channel $C$ associated with the \emph{ambient space} $W$ is a channel with input and output alphabet $\mathcal{P}(W)$. The channel input $V$ and output $U$ are related by
\[U=\mathcal{H}_k(V) +  E,\]
where $k=\dim(U\cap V)$, $E$ is an error subspace (wlog $E$ may be taken such that $E\cap V=\{0\}$), and $\mathcal{H}_k(V)$ is an operator returning an arbitrary $k$-dimensional subspace of $V$.\footnote{In this work, we use the worst-case error model; in a probabilistic model, $\mathcal{H}_k$ would be a stochastic operator.}

In transforming $V$ to $U$, we say the operator channel commits $r=\dim(V)-k$ \emph{deletions} and $t=\dim(E)$ \emph{insertions}. 
\end{defn}

\section{Existential bounds for the operator channel}

We first formally define the notion of list decoding from insertions and deletions on the (adversarial) operator channel.

\begin{defn}[List decodability]
\label{def:list-decoding-subspace}
A subspace code $\mathcal{C} \subseteq \mathcal{P}_\ell(W)$ is said to be {\em $(t,r,L)$-list decodable} (or list decodable from $t$ insertions and $r$ deletions with list size $L$), if for every subspace $T \in \mathcal{P}(W)$, the number of subspaces $U \in \mathcal{C}$ such that $T = \mathcal{H}_p(U) + E$ for some subspace $E$ and integer $p$ satisfying 
\begin{quote}
$\dim(E) \le t$, $E \cap U = \{0\}$ and $\ell - p \le r$
\end{quote}
 is at most $L$. 
 
We will say that any such subspace $U$ differs from $T$ by at most $t$ insertions and $r$ deletions.
 
The problem of list decoding from (up to) $t$ insertions and $r$ deletions consists of finding the list of all such subspaces $U$, given the input ``received" subspace $T$. 
\end{defn}

We now present the random coding argument showing the existence of good list-decodable subspace codes. This gives us the benchmark for the error tolerance of our later explicit constructions. 

\begin{thm}
\label{thm:random-subspace-ld}
For every $L \ge 1$, for all large enough integers $n,\ell$ with $\ell \le n/2$, a random subspace code $\mathcal{C} \subseteq \mathcal{P}_\ell(\F_q^n)$ of rate $R$ (obtained by picking $q^{Rn\ell}$ subspaces uniformly and independently at random), is $(t,r,L)$-list decodable with high probability provided 
\[ \frac{t}{\ell} + (L+1) \frac{r}{\ell} < L - (L+1) R \ . \]
(The ratios $t/\ell$ and $r/\ell$ are the fraction of insertions and deletions, respectively.)
\end{thm}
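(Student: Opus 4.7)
My plan is a first-moment argument. Fix a random code $\mathcal{C}$ of $M = q^{Rn\ell}$ subspaces drawn independently and uniformly from $\mathcal{P}_\ell(\F_q^n)$. A ``bad'' configuration is a subspace $T \in \mathcal{P}(\F_q^n)$ together with $L+1$ distinct codewords $U_0, \dotsc, U_L$, each within $t$ insertions and $r$ deletions of $T$. I will show that under the hypothesis the expected number of bad configurations is $o(1)$, so by Markov's inequality the code is $(t,r,L)$-list decodable with high probability.

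The key trick is to replace the quantifier over $T$ with an easier-to-count combinatorial witness. If $U_0, \dotsc, U_L$ all lie within $(t,r)$ of $T$, then for each $i$ we have $\dim(U_i \cap T) \ge \ell - r$ and $\dim T \le \dim(U_i \cap T) + t \le \ell + t$. Picking any $(\ell-r)$-dimensional subspace $W_i \subseteq U_i \cap T$ gives $W_i \subseteq U_i$ and $\sum_i W_i \subseteq T$, so $\dim(\sum_i W_i) \le \ell + t$. Hence the bad event is contained in the ``skeleton'' event: there exist codewords $U_0, \dotsc, U_L$ and $(\ell-r)$-dimensional subspaces $W_i \subseteq U_i$ with $\dim(\sum_i W_i) \le \ell + t$.

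I then union-bound over skeletons. For a fixed tuple $(W_0, \dotsc, W_L)$, the probability that a random $\ell$-dimensional subspace contains $W_i$ is $\binom{n-\ell+r}{r}_q / \binom{n}{\ell}_q$, since the number of $\ell$-dimensional subspaces extending a fixed $(\ell-r)$-dimensional one is $\binom{n-\ell+r}{r}_q$. By independence of the $M$ codewords, the expected number of codeword tuples realising a given skeleton is at most $\bigl(M \binom{n-\ell+r}{r}_q / \binom{n}{\ell}_q\bigr)^{L+1}$. The number of admissible $(W_0, \dotsc, W_L)$ is bounded, up to polynomial factors in $n$, by $\binom{n}{\ell+t}_q \binom{\ell+t}{\ell-r}_q^{L+1}$ (fix a subspace of dimension $\le \ell + t$ to contain $\sum_i W_i$, then choose each $W_i$ inside it). Using $\binom{a}{b}_q = q^{b(a-b) + O(1)}$ and taking $\log_q$ gives total exponent
\[ (\ell+t)(n-\ell-t) - (L+1)(\ell-r)(n-\ell-t-r) + (L+1) R n \ell + O(\log n),\]
which to leading order in $n$ equals $n\ell\,[\tau + (L+1)\rho - L + (L+1)R]$. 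Under the hypothesis $\tau + (L+1)\rho < L - (L+1) R$ this is $-\Omega(n\ell)$, so the expected number of bad skeletons, and hence the probability of a bad configuration, tends to zero.

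The main obstacle will be identifying the correct relaxation: a more direct ball-counting union bound over each possible dimension $d = \dim T$ also works but requires a delicate case analysis (split on whether $p^* = \max(\ell-r,d-t)$ equals $\ell-r$ or $d-t$), with the binding case occurring at $d = \ell - r + t$. The skeleton formulation bypasses that case analysis, producing the stated trade-off directly via a single clean Gaussian-binomial exponent computation.
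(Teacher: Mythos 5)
Your proof is correct and reaches the right exponent, but the decomposition is genuinely different from the paper's, and somewhat cleaner. The paper union-bounds directly over received subspaces $T$ of each dimension $d \le \ell+t$ and over $(L+1)$-tuples of codewords, bounding the per-codeword probability by $\sum_{r'\le r}(\text{number of $(\ell-r')$-dim subspaces of } T)\cdot(\text{prob. a random $\ell$-dim subspace contains a fixed $(\ell-r')$-dim one})$, then using the crude estimates ``number of $T$'s of dim $\le\ell+t$ is $\le q^{n(\ell+t)}$'' and a $q^{O((\ell+t)^2)}$ fudge factor. Your skeleton formulation instead tracks only $V := \sum_i W_i$ (dim $\le \ell+t$) and the individual $W_i$'s, dropping $T$ entirely; this avoids the double-counting across distinct $T$'s that share the same intersection subspaces and gives a tighter count ($\binom{n}{\ell+t}_q$ vs.\ $q^{n(\ell+t)}$). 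Both approaches give identical leading-order exponents. Your closing remark slightly mischaracterizes the ``direct'' approach: the paper does not perform the $\max(\ell-r,d-t)$ case split; it simply discards the $p\ge d-t$ constraint and overcounts, which is valid for an upper bound and arguably simpler than you anticipated. One caveat in your counting: ``the number of admissible $(W_0,\dotsc,W_L)$ is, up to polynomial factors, $\binom{n}{\ell+t}_q\binom{\ell+t}{\ell-r}_q^{L+1}$'' tacitly uses that $\sum_{d\le\ell+t}\binom{n}{d}_q$ is dominated by the top term, which requires $\ell+t \le n/2$; for $\ell+t > n/2$ the dominant term is $\binom{n}{\lceil n/2\rceil}_q$ and the claimed bound under-counts. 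The theorem's hypothesis $\ell\le n/2$ does not rule this out (since $t$ may approach $L\ell$), so strictly you should use the looser bound $q^{n(\ell+t)}$ as the paper does, or add the hypothesis $\ell+t\le n/2$. This is a minor point in a regime where the paper's own treatment of the quadratic error terms is equally informal, but it is worth patching.
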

\begin{proof}
Fix a subspace $T$ of dimension $d$, where $\ell-r \le d \le \ell+t$ (the range of dimensions possible when there are up to $t$ insertions and $r$ deletions). Fix a subset $\mathcal{S}$ of $(L+1)$ codewords from the random code $\mathcal{C}$. The probability that each subspace in $\mathcal{S}$ differs from $T$ by at most $t$ insertions and $r$ deletions is at most 
\[ \sum_{t'=0}^t \sum_{r'=0}^r (q^d)^{\ell - r'} q^{(\ell-n)(\ell - r')} \le O(tr) q^{d\ell+\ell^2} q^{-n(\ell-r)} \ .\]
Further this event is independent for different codewords in $\mathcal{S}$ by the random choice of $\mathcal{C}$. By a union bound over all choices of $T$ and $\mathcal{S}$, the probability that $\mathcal{C}$ fails to be $(t,r,L)$-list decodable is at most
\[ q^{Rn\ell (L+1)} q^{n (\ell + t)} \Bigl( q^{O((\ell+t)^2)} q^{-n(\ell-r)} \Bigr)^{L+1} \ . \]
For large enough $n$, this quantity is $q^{-\Omega(n)}$ provided $R(L+1) + (\ell+t) < (\ell-r)(L+1)$, or equivalently if $\frac{t}{\ell} + (L+1) \frac{r}{\ell} < L - (L+1) R$.\footnote{A more careful argument should improve the requirement slightly to $\frac{t}{\ell} + L \frac{r}{\ell} < L - (L+1) R$, though for simplicity we have not pursued this here.}
\end{proof}

\section{Linearized folded RS codes and their list decoding}
\label{sec-folded-dec}

\subsection{Preliminaries}

Set $\F_q$ a finite field. $\F_{q^m}$ will be an extension field of $\F_q$, which we will consider as a vector space over $\F_q$. 

For a nonnegative integer $i$, write $X^{[i]}=X^{q^i}$. The map $X\mapsto X^{[i]}$ satisfies the following properties. 
\begin{itemize}
\item $(X^{[i]})^{[j]}=X^{[i+j]}$ $\forall i,j$. 
\item For $\alpha\in\F_q$, $\alpha^{[i]}=\alpha$ $\forall i$. 
\end{itemize}

\begin{defn} A \emph{linearized polynomial} over $\F_{q^m}$ is a polynomial $f$ of the form 
\[f(X) = \sum_{i=0}^k f_i X^{[i]},\]
where $f_i\in \F_{q^m}$. The integer $k$ is the \emph{$q$-degree} of $f$. 
\end{defn}

By the properties stated above, a linearized polynomial over $\F_{q^m}$ is $\F_q$-linear. Further, given two linearized polynomials $f_1,f_2$ of $q$-degree $k_1,k_2$, respectively, the composition $f_1(f_2(X))$ has $q$-degree $k_1+k_2$. 

\subsection{Code definition}

Our message consists of $k\leq\ell$ symbols $(f_0,\dotsc, f_{k-1})$ over $\F_q$, which we will consider as a linearized polynomial $f(X) = \sum_{i=0}^{k-1} f_i X^{[i]}$. (Note that the original KK code took message coefficients over $\F_{q^m}$.) 
\smallskip

Let $\gamma$ generate a normal basis for $\F_{q^m}$ (that is, the set $\{1,\gamma,\gamma^{[1]},\dotsc, \gamma^{[m-1]}\}$ forms a basis). 

\begin{defn}[Linearized FRS codes]
Let $\alpha_i\in \F_{q^m}$ for $i=1,\dotsc, \ell$ be linearly independent over $\F_q$. Our code encodes $f\in\F_q[X]$ by 
\[V=\gen{\{(\alpha_i,f(\gamma\alpha_i),f(\gamma^{[1]}\alpha_i),\dotsc, f(\gamma^{[s-1]}\alpha_i)\}_{i=1}^\ell}\] 

for some parameter $s$. 

We will refer to this as the \emph{linearized folded Reed-Solomon code} $\mathrm{lFRS}_{\{\alpha_i\}}^{\ell, m,s}$. 
\end{defn}

\begin{rmk} The rate of this code is $\frac{k}{\ell(\ell+ms)}<\frac{1}{ms}$. 
\end{rmk}

\subsection{List decoding algorithm}

Suppose that $t$ insertions and $r$ deletions have occured, so a space $U$ of dimension $\ell+t-r$ is received. Give the received subspace a basis $\{(y_{i0}, y_{i1},\dotsc, y_{is})\}_{i=1}^{\ell+t-r}$. 

Now we can interpolate a polynomial $Q(X, Y_1,\dotsc, Y_s)$ of the form 
\begin{equation}
\label{form-Q}
Q(X, Y_1,\dotsc, Y_s) = A_0(X) + A_1(Y_1) + \dotsb + A_s(Y_s)
\end{equation}
with $A_0$ of $q$-degree at most $D+k-1$ and $A_1,\dotsc, A_s$ of $q$-degree at most $D$ ($D$ to be set later), all linearized polynomials. 

We will require 
\begin{equation}
\label{interp-cond}
Q(y_{i0},y_{i1},\dotsc, y_{is})=0\qquad i=1,\dotsc, \ell+t-r
\end{equation} 

\begin{lemma} 
\label{lem:zero-cond}
For $D+1>\frac{(\ell+t-r)-k+1}{s+1}$, a (nonzero) polynomial $Q$ of the form \eqref{form-Q} exists. 
\end{lemma}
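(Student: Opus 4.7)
The plan is a standard dimension-counting argument: the interpolation conditions \eqref{interp-cond} form a homogeneous linear system in the coefficients of the $A_j$'s, and as soon as the number of unknowns strictly exceeds the number of equations, a nonzero solution must exist.

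First I would count unknowns. Each $A_j$ is a linearized polynomial with coefficients in $\F_{q^m}$: $A_0$ has $q$-degree at most $D+k-1$ and therefore contributes $D+k$ coefficients, while each of $A_1,\dotsc,A_s$ has $q$-degree at most $D$ and contributes $D+1$ coefficients. Thus the total number of free $\F_{q^m}$-coefficients defining $Q$ is
\[
(D+k) + s(D+1) \;=\; D(s+1) + k + s\ .
\]
Next I would observe that each condition $Q(y_{i0},y_{i1},\dotsc,y_{is}) = 0$ is $\F_{q^m}$-linear in these coefficients, because $Q$ has the additive form \eqref{form-Q} and each summand $A_j(y_{ij}) = \sum_{\nu} (A_j)_\nu\, y_{ij}^{[\nu]}$ is linear in the unknowns $(A_j)_\nu$ (the values $y_{ij}^{[\nu]}$ are fixed scalars in $\F_{q^m}$). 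Hence \eqref{interp-cond} is a homogeneous linear system over $\F_{q^m}$ with $\ell+t-r$ equations.

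Finally, a nontrivial solution is guaranteed whenever the number of unknowns strictly exceeds the number of equations, i.e.\ whenever
\[
D(s+1) + k + s \;>\; \ell+t-r\ ,
\]
which rearranges to $(D+1)(s+1) > (\ell+t-r) - k + 1$, exactly the hypothesis of the lemma. There is no real obstacle here: the only point requiring any care is noticing that the linearized form of $Q$ makes the interpolation conditions linear in the coefficients (so that standard linear algebra applies), and then matching the inequality after rearrangement.
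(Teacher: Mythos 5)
Your proof is correct and is essentially the same dimension-counting argument the paper gives: you count $(D+k) + s(D+1) = (D+1)(s+1) + k - 1$ unknown $\F_{q^m}$-coefficients, note that the $\ell+t-r$ interpolation conditions are $\F_{q^m}$-linear in them, and observe that the hypothesis makes the number of unknowns strictly exceed the number of equations. The only cosmetic difference is that you explicitly spell out the linearity of the conditions and the rearrangement to the stated inequality, both of which the paper leaves implicit.
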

\begin{proof}
The interpolation conditions \eqref{interp-cond} define a homogeneous linear system in the coefficients of $Q$, and there are $\ell+t-r$ conditions. 

The number of monomials in $Q$ is $(D+1)(s+1) + k-1$, so when $D+1>\frac{(\ell+t-r)-k+1}{s+1}$, this is at least $\ell+t-r$ and a nonzero solution exists. 
\end{proof}

Therefore, fix $D=\floor{\frac{(\ell+t-r) - k+1}{s+1}}$. 

\begin{lemma} 
\label{lem:decoding-cond}
Let $f$ be a codeword differing from the received word by $r$ deletions. Then if $\ell-r>D+k-1$, $Q(X, f(\gamma X),\dotsc, f(\gamma^{[s-1]}X))=0$. 
\end{lemma}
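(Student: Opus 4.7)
The plan is to exploit two features that the construction was engineered to deliver: first, that $Q$ is $\F_q$-linear in $(X,Y_1,\ldots,Y_s)$ jointly (since each $A_i$ is a linearized polynomial in one variable), and second, that a linearized polynomial of small $q$-degree cannot vanish on a large $\F_q$-subspace. By $\F_q$-linearity of $Q$, the interpolation conditions \eqref{interp-cond} extend from the given basis $\{(y_{i0},\dotsc,y_{is})\}$ to \emph{every} vector of $U$. I would then transfer this vanishing onto the codeword subspace $V$ through $U \cap V$, and finally collapse it to a one-variable identity in $X$.

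For the main argument, let $V$ be the codeword subspace corresponding to $f$. Since $r$ deletions occurred, $\dim_{\F_q}(U \cap V) \ge \ell - r$. Every vector of $V$ has the shape $\bigl(\beta,\, f(\gamma\beta),\, f(\gamma^{[1]}\beta),\dotsc, f(\gamma^{[s-1]}\beta)\bigr)$ for some $\beta = \sum_i c_i \alpha_i$, and because $\alpha_1,\dotsc,\alpha_\ell$ are $\F_q$-linearly independent, the projection $\pi_0\colon V \to \F_{q^m}$ onto the first coordinate is $\F_q$-linear and injective. Hence $S := \pi_0(U \cap V)$ is an $\F_q$-subspace of $\F_{q^m}$ with $\dim_{\F_q} S \ge \ell - r$. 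For each $\beta \in S$, the vanishing of $Q$ on $U \cap V \subseteq U$ gives
\[ 0 \;=\; Q\bigl(\beta,\, f(\gamma\beta),\, f(\gamma^{[1]}\beta),\dotsc, f(\gamma^{[s-1]}\beta)\bigr) . \]
Now set $P(X) := Q(X, f(\gamma X),\dotsc, f(\gamma^{[s-1]} X)) = A_0(X) + \sum_{j=1}^{s} A_j\!\left(f(\gamma^{[j-1]} X)\right)$. Sums and compositions of linearized polynomials are linearized, and $q$-degrees add under composition, so $P$ is linearized of $q$-degree at most $D + k - 1$ (the bound on $A_0$ matches the $D + (k{-}1)$ bound on each $A_j \circ f$). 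The displayed identity then says $P$ vanishes on all of $S$.

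To finish, I would invoke the standard fact that a nonzero linearized polynomial over $\F_{q^m}$ of $q$-degree at most $d$ has $\F_q$-kernel of dimension at most $d$. Combined with $\dim_{\F_q} S \ge \ell - r > D + k - 1$, this forces $P \equiv 0$, which is exactly the claim. The main obstacle is conceptual rather than technical: one must see that the construction was tailored so that (i) $Q$ is jointly $\F_q$-linear, promoting vanishing on a basis of $U$ to vanishing on all of $U$, and (ii) the first-coordinate projection is injective on $V$, so that the deletion bound on $\dim(U \cap V)$ carries over undiminished to $\dim_{\F_q} S$. Once these observations are isolated, the conclusion is immediate from the root-counting property of linearized polynomials.
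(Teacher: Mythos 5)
Your proof is correct and follows essentially the same route as the paper: both use the joint $\F_q$-linearity of $Q$ to propagate the interpolation constraints to $U \cap V$, identify at least $\ell - r$ linearly independent first coordinates there (your injectivity-of-$\pi_0$ observation is the paper's ``the $x_{i0}$ are linearly independent by linearity of $f$'' phrased more cleanly), and then apply the root bound for linearized polynomials to the composite $\hat{Q}(X) = Q(X, f(\gamma X), \dotsc, f(\gamma^{[s-1]}X))$ of $q$-degree at most $D+k-1$.
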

\begin{proof}
Let $\{(x_{i0},x_{i1},\dotsc, x_{is})\}_{i=1}^{\ell-r}$ be a basis for $V\cap U$. Then by definition of $V$, for every $i$ and every $j>1$, $x_{ij} = f(\gamma^{[j-1]}x_{i0})$. By linearity of $Q$, we also have $Q(x_{i0}, x_{i1},\dotsc, x_{is})=0$ for every $i$. 

Note the $x_{i0}$ are linearly independent: This follows directly from the linearity of $f$. 

Consider the (univariate) linearized polynomial $\hat{Q}(X)=Q(X,f(\gamma X),\dotsc, f(\gamma^{[s-1]}X))$, which has $q$-degree at most $D+k-1$. It is a standard fact that a nonzero linearized polynomial of $q$-degree $d$ has at most $d$ linearly independent roots. Since $\hat{Q}(x_{i0})=0$ for $1\leq i\leq \ell-r$, and the $x_{i0}$ are linearly independent, if $\ell -r > D+k-1$, then $\hat{Q}(X)=0$. 
\end{proof}
By Lemma~\ref{lem:decoding-cond} and the above choice of $D$, we have  $Q(X, f(\gamma X),\dotsc, f(\gamma^{[s-1]}X))=0$ if 
\begin{equation}
\label{agreement-cond}
\ell-r>\floor{\frac{(\ell+t-r) - k+1}{s+1}}+k-1 \ .
\end{equation}
The condition (\ref{agreement-cond}) is met if 
\[t<s(\ell-r-k+1).\]

The algebraic condition $Q(X, f(\gamma X), \dotsc, f(\gamma^{[s-1]}X))=0$ forms a homogeneous linear system in the coefficients $f_0,\dotsc, f_{k-1}$ of $f$. 

Suppose that for some $i> 0$ we fix the values of $f_0,\dotsc, f_{i-1}$. Then we can determine $f_i$ from the algebraic expression for the coefficient of $X^{[i]}$, which must be zero. That is, 
\begin{align*}
0 & = Q(X, f(\gamma X),\dotsc, f(\gamma^{[s-1]}X))\\
& = \sum_i a_{0i}X^{[i]} + \sum_i a_{1i}[f(\gamma X)]^{[i]} + \dotsb + \sum_i a_{si} [f(\gamma^{[s-1]}X)]^{[i]}\\
&=  \sum_i a_{0i}X^{[i]} + \sum_i\left(\sum_{j=0}^i a_{1j} f_{(i-j)}^{[j]}\right) (\gamma X)^{[i]} + \dotsb + \sum_i \left(\sum_{j=0}^ia_{sj} f_{(i-j)}^{[j]}\right) (\gamma^{[s-1]}X)^{[i]}
\end{align*}
so for each $i$, 
\[a_{0i} + \sum_{j=1}^s a_{j0}\gamma^{[i+j-1]}f_i + \sum_{j=1}^s \sum_{j'<i} \gamma^{[i+j-1]}a_{j(i-j')}f_{j'}^{[i-j']}=0\]

In particular, for fixed $f_0,\dotsc, f_{i-1}$, $f_i$ is uniquely determined unless 
\[g(X) := a_{10}X + a_{20} X^{[1]} + \dotsb + a_{s0} X^{[s-1]}\] 
has a zero at $\gamma^{[i]}$. 

\begin{lemma} We may assume $g(X)\neq 0$. 
\end{lemma}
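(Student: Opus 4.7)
The plan is to argue by contradiction combined with an iterated $q$-th power reduction: starting from any nonzero interpolating $Q$ with $g\equiv 0$, I will produce a strictly smaller nonzero interpolating polynomial with the same codeword kernel, and iterating will eventually either yield $g\not\equiv 0$ or drive the polynomial to zero, giving a contradiction.

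First I would assume some codeword $f$ of the list satisfies $\hat{Q}(X)\equiv 0$ (if not, the decoding list is empty and there is nothing to prove). Reading off the coefficient of $X=X^{[0]}$ in $\hat{Q}$ gives $a_{00}+f_0\,g(\gamma)=0$; combined with $g\equiv 0$, this forces $a_{00}=0$. So every ``constant-term'' coefficient $a_{j0}$ vanishes for $j=0,1,\dotsc,s$, and each $A_j(Y)=\sum_{i\ge 1}a_{ji}Y^{[i]}$ can be written as $B_j(Y)^q$, where $B_j(Y):=\sum_{i\ge 0}a_{j,i+1}^{1/q}Y^{[i]}$ has $q$-degree one less than $A_j$ (using that $\F_{q^m}$ admits unique $q$-th roots). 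Frobenius additivity in characteristic $p$ then collapses the sum of $q$-th powers into a single $q$-th power, yielding $Q=\tilde{Q}^q$ with $\tilde{Q}:=B_0(X)+\sum_j B_j(Y_j)$. The polynomial $\tilde{Q}$ is nonzero (since $Q\ne 0$), satisfies the interpolation conditions (because $\tilde{Q}(y_{i0},\dotsc,y_{is})^q=Q(y_{i0},\dotsc,y_{is})=0$), has strictly smaller $q$-degree bounds ($\le D+k-2$ for its $A_0$-part and $\le D-1$ for its $A_j$-parts), and has the same codeword kernel since $\hat{\tilde{Q}}^q=\hat{Q}$.

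Iterating this reduction, at each stage I either obtain $g\not\equiv 0$ (stop, done) or decrease all $q$-degree bounds by one. After at most $D+1$ steps the $A_j$-parts for $j\ge 1$ must vanish, and only an $A_0$-part of $q$-degree at most $k-2$ remains. The interpolation then becomes $A_0(y_{i0})=0$ for all $i$. Since $\spam\{y_{i0}\}$ contains $\spam\{x_{i0}\}$---the first coordinates of a basis of $V\cap U$, linearly independent by Lemma~\ref{lem:decoding-cond}---it has dimension at least $\ell-r>D+k-1\ge k-1$, which exceeds $k-2$. A nonzero linearized polynomial of $q$-degree at most $k-2$ cannot have more than $k-2$ linearly independent roots, so $A_0\equiv 0$, and the fully reduced polynomial (hence $Q$ itself) is zero, contradicting $Q\ne 0$.

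The main obstacle will be cleanly establishing the reduction $Q=\tilde{Q}^q$: this hinges on the observation that $g\equiv 0$ together with the existence of a codeword in the list forces $a_{00}=0$, after which Frobenius does the rest. The accompanying bookkeeping---that $\tilde{Q}$ remains interpolating, that its codeword kernel agrees with $Q$'s, and that the $q$-degrees strictly decrease---is routine, as is the final root-counting contradiction.
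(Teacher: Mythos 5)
Your proof is correct and rests on the same core idea as the paper's: when $g\equiv 0$, the interpolation polynomial can be recognized as a $q$-th power of a lower-degree polynomial of the same form, and this reduction can be repeated. The paper packages the iteration into a single jump: it identifies the smallest index $j^*$ with $a_{ij^*}\neq 0$, argues (using $\hat{Q}\equiv 0$ for a codeword on the list, just as you do to get $a_{00}=0$) that some $a_{ij^*}$ with $i>0$ must be nonzero, and then directly replaces $Q$ by the shifted polynomial $Q_{j^*}$ with $(Q_{j^*})^{[j^*]}=Q$, which has $g\neq 0$ by construction. Your version reaches the same place by shaving off one $q$-th power at a time; the number of iterations you take before either stopping or reaching the contradiction is precisely the paper's $j^*$.

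The one place where your argument genuinely differs is the termination/contradiction case. The paper needs no root-counting: once $j^*$ is fixed, the definition immediately forces a nonzero $a_{ij^*}$ for some $i>0$, so the shift produces $g\neq 0$ and the argument ends. You instead push the reduction until all $A_j$ with $j\ge 1$ vanish and then invoke the interpolation conditions plus the root-counting bound on linearized polynomials (using that $\spam\{y_{i0}\}\supseteq\spam\{x_{i0}\}$ has dimension $\ell-r>D+k-1$) to conclude $A_0\equiv 0$ and hence $Q=0$. This is a sound alternative, and it explicitly handles the edge case the paper treats implicitly via the definition of $j^*$, but it is somewhat heavier machinery than necessary; the paper's one-line contradiction (``the coefficient of $X^{[j^*]}$ would be $a_{0j^*}\neq 0$, contradicting $\hat{Q}\equiv 0$'') suffices. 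Also note that your blanket claim ``After at most $D+1$ steps the $A_j$-parts for $j\ge1$ must vanish'' tacitly assumes the reduction is available at every one of those steps; strictly speaking, at each step you must first rule out $g\neq 0$ (your stopping condition) before continuing, which you do say earlier, so this is just a matter of phrasing rather than a gap.
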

\begin{proof} Let $j^*$ be the smallest value such that $a_{ij^*}\neq 0$ for some $0\leq i\leq s$. If $a_{ij^*}=0$ for all $i>0$, the coefficient of $X$ in $\hat{Q}(X)$ is $a_{0j^*}$, which must be zero, a contradiction. Thus we can assume $a_{ij^*}\neq 0$ for some $i>0$. 

If $j^*=0$, we are done. Otherwise, consider the polynomial $Q_{j^*}$ defined by 
\[Q_{j^*}(X, Y_1,\dotsc, Y_s) = \sum_i a_{0i}^{[m-j^*]} X^{[i-j^*]} + \sum_i a_{1i}^{[m-j^*]} Y_1^{[i-j^*]} + \dotsb + \sum_i a_{si}^{[m-j^*]} Y_s^{[i-j^*]}.\]

Since $(Q_{j^*})^{[j^*]} = Q$, if $\hat{Q}(X)=0$, $Q_{j^*}(X, f(\gamma X),\dotsc, f(\gamma^{[s-1]}X))=0$, so we may replace $Q$ by $Q_{j^*}$, giving $g(X)\neq 0$. 
\end{proof}

Since $k\leq m$ and the $\gamma^{[i]}$ are chosen to be linearly independent for $0\leq i<m$, $g(\gamma^{[i]})$ can be zero for at most $s-1$ values of $i<m$, yielding a final list size of $q^{s-1}$.

\begin{rmk} When the coefficients $f_j$ are taken from $\F_q$, $f_j^{[i]}=f_j$ for each $i$, and so in particular, each coefficient $f_j$ is a linear 
combination of $f_0,\dotsc, f_{j-1}$.
\end{rmk}

In summary, we have our main result:

\begin{thm}[Main]
\label{thm:main-lfrs}
For every $s$, the code $\mathrm{lFRS}_{\{\alpha_i\}}^{\ell, m,s}$ 
satisfies the property that for every received subspace $U\in\mathcal{P}(\F_{q^{\ell+sm}})$, an affine subspace $S\subseteq\F_q[X]$ of dimension at most $s-1$ can be found in polynomial time which contains every $f\in\F_q[X]$ of degree less than $k$ whose encoding differs from $U$ by $t$ insertions and $r$ erasures provided 
\begin{equation}
\label{eq:final-t}
t<s(\ell-r-k+1) \ .
\end{equation}
\end{thm}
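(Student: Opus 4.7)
The plan is to follow the two-step linear-algebraic list decoding template already set up by Lemmas~\ref{lem:zero-cond} and~\ref{lem:decoding-cond}: first interpolate a nonzero linearized polynomial $Q(X,Y_1,\dotsc,Y_s) = A_0(X) + A_1(Y_1) + \dotsb + A_s(Y_s)$ of the form \eqref{form-Q} that vanishes on a basis of the received subspace $U$, taking $D = \floor{((\ell+t-r)-k+1)/(s+1)}$; and then argue that any valid codeword $f$ satisfies the functional equation $\hat{Q}(X) := Q(X,f(\gamma X),\dotsc,f(\gamma^{[s-1]}X)) = 0$. The agreement condition \eqref{agreement-cond}, namely $\ell - r > D + k - 1$, combined with the chosen $D$ rearranges to precisely $t < s(\ell - r - k + 1)$, which is \eqref{eq:final-t}. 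Both the interpolation and the subsequent back-substitution can be carried out by solving homogeneous linear systems over $\F_{q^m}$ of size polynomial in $\ell, m, s, k$, so the running time bound is automatic.

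The core of the argument — and the step to do most carefully — is to turn the relation $\hat{Q} \equiv 0$ into the affine-subspace structure on the list of candidate $f$'s. Expanding $\hat{Q}$ and equating the coefficient of $X^{[i]}$ to zero (using that $f_j \in \F_q$, so $f_j^{[i]} = f_j$) produces, for each $i$, an $\F_{q^m}$-linear relation of the form $g(\gamma^{[i]})\,f_i = -a_{0i} - \sum_{j=1}^s \sum_{j'<i} \gamma^{[i+j-1]} a_{j(i-j')} f_{j'}$, where $g(X) = a_{10}X + a_{20}X^{[1]} + \dotsb + a_{s0}X^{[s-1]}$. This gives a triangular recurrence: whenever $g(\gamma^{[i]}) \ne 0$, $f_i$ is determined as an $\F_q$-affine function of $f_0,\dotsc,f_{i-1}$, and whenever $g(\gamma^{[i]}) = 0$, $f_i$ is a free parameter in $\F_q$. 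The list of admissible $f$ thus forms an $\F_q$-affine subspace of $\F_q[X]$ whose dimension equals the number of indices $0 \le i < k$ for which $g(\gamma^{[i]}) = 0$.

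The main obstacle is the edge case that $g$ might be identically zero, in which case the recurrence degenerates and the list size is not controlled. I would handle this exactly as in the $g(X) \ne 0$ lemma of the excerpt: let $j^\ast$ be the smallest index such that some $a_{ij^\ast}$ with $i > 0$ is nonzero (ruling out $j^\ast$ with all $a_{ij^\ast} = 0$ for $i > 0$ by a linear-term inspection of $\hat{Q}$), and replace $Q$ by $Q_{j^\ast}$, whose $[j^\ast]$-th Frobenius image is $Q$. Since $X \mapsto X^{[j^\ast]}$ is an $\F_q$-linear bijection on $\F_{q^m}$, $\hat{Q} \equiv 0$ still yields $Q_{j^\ast}(X, f(\gamma X),\dotsc, f(\gamma^{[s-1]} X)) \equiv 0$, and the constant term of the corresponding $g_{j^\ast}(X)$ is now nonzero. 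Because $g_{j^\ast}$ is a nonzero linearized polynomial of $q$-degree at most $s-1$, it has at most $s-1$ $\F_q$-linearly independent roots in $\F_{q^m}$, and since $\{1,\gamma,\gamma^{[1]},\dotsc,\gamma^{[m-1]}\}$ is a normal basis with $k \le m$, at most $s-1$ of the $\gamma^{[i]}$ for $0 \le i < k$ are zeros. Consequently, the affine subspace $S$ has dimension at most $s-1$ over $\F_q$, completing the proof.
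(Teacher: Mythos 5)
Your proof follows essentially the same route as the paper: interpolate $Q$ of the form \eqref{form-Q} with the $D$ dictated by Lemma~\ref{lem:zero-cond}, invoke Lemma~\ref{lem:decoding-cond} plus the arithmetic rearrangement of \eqref{agreement-cond} to get $\hat{Q}\equiv 0$ under condition \eqref{eq:final-t}, then expand $\hat{Q}$ coefficient-by-coefficient to obtain the triangular recurrence governed by $g(\gamma^{[i]})$, and finally handle the degenerate $g\equiv 0$ case by the Frobenius shift to $Q_{j^\ast}$ and count roots of a nonzero linearized polynomial of $q$-degree at most $s-1$. This is the paper's argument.

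One small imprecision: you assert that after passing to $Q_{j^\ast}$, ``the constant term of the corresponding $g_{j^\ast}(X)$ is now nonzero.'' The shift argument only guarantees that \emph{some} coefficient among $a_{10}',\dotsc,a_{s0}'$ is nonzero (namely $a_{ij^\ast}^{[m-j^\ast]}$ for the relevant $i>0$), not that the lowest one is; in other words it guarantees $g_{j^\ast}\neq 0$, not that $g_{j^\ast}$ has a nonzero constant term. This is all that is needed — a nonzero linearized polynomial of $q$-degree $\le s-1$ still has at most $s-1$ linearly independent roots even when its lowest coefficient vanishes — so the conclusion stands, but the stated claim should be weakened to ``$g_{j^\ast}\neq 0$.'' Similarly, ``the dimension of $S$ equals the number of indices with $g(\gamma^{[i]})=0$'' should be ``at most,'' since for such an index the remaining equation may actually be an inconsistent constraint on $f_0,\dotsc,f_{i-1}$; again this only helps the bound.
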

The condition (\ref{eq:final-t}) can be rewritten as $t + s r < s (\ell - k +1) \approx s \ell (1 - R m s)$, which can be compared with the existential bound of Theorem~\ref{thm:random-subspace-ld}. Our list-size bound is higher: it is $\approx q^s$ rather than $s$, but this is inherent given the linearity of our code (see below Remark). More crucially, our rate has to be polynomially small instead of constant.

\begin{rmk} 
\label{rem:list-size-bad-linear}
A worst-case list size of the form $q^n$ for some $n>0$ is unavoidable outside the unique decoding radius. To see this, consider the case $r=0$ of no erasures. Then if $g_1,\dotsc, g_{n+1}$ are linearly independent (as coefficient vectors) and agree with the received subspace, any combination $\sum\lambda_ig_i$ with $\sum\lambda_i=1$ also agrees with the received subspace, giving a list size of $q^n$. 

Note that this difficulty is inherent in any code whose encoding is a linear function of the message coordinates while allowing large linear subspaces of messages. One way to avoid this large list size is to instead draw the message coordinates from a so-called \emph{subspace-evasive} subset of $(\F_q)^k$, as described in \cite{frs-lin-alg}. This paper shows the existence of a subset of size $q^{k(1-\epsilon)}$ which intersects with any $s$-dimensional subspace in at most $O(s/\epsilon)$ points. In particular, we then guarantee a list size which is linear in the parameters, for a small cut in rate. 
\end{rmk}

\begin{rmk} The analysis of this section also holds if $f$ is taken from $\F_{q^m}[X]$, giving us a rate improvement; however, the final list size will be $q^{m(s-1)}$, which is non-polynomial when the code has constant rate. By applying the list-size reduction methods of \cite{frs-lin-alg} based on subspace-evasive sets, we can reduce the final list-size to a polynomial, but pruning the list of candidates may take super-polynomial time. 
\end{rmk}

\section{Removing the folding requirement}

In this section, we show how to improve the rate of our code by removing the folding requirement and working only with a restricted KK code; however, we are not able to recover from deletions with this code. We will require that the message $f$ is taken over $\F_q$, and that the evaluation points $\alpha_i$ each generate normal bases for $\F_q^m$. 

We will send the $\ell$-dimensional subspace generated by $\{(\alpha_i, f(\alpha_i))\}_{i= 1}^\ell$. The ambient space is $\gen{\alpha_1,\dotsc, \alpha_\ell}\otimes \F$ of dimension $\ell + m$. 
\medskip

The receiver selects $\gamma\in\F_{q^m}$ which generates a normal basis for $\F_{q^m}$. 
This will correspond to the (explicitly transmitted) parameter $\gamma$ in the previous section. 

We will need the following lemma. 
\begin{lemma}
\label{weak}
Let $\alpha, y\in \F_{q^m}$ such that $\alpha$ generates a normal basis. Then there is exactly one linearized polynomial $f\in\F_q[X]$ of $q$-degree at most $m-1$ with $f(\alpha)=y$. 
\end{lemma}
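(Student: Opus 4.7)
The plan is to reduce this to the definition of a normal basis. I would start by writing out $f$ explicitly as $f(X) = \sum_{i=0}^{m-1} f_i X^{[i]}$ with coefficients $f_i \in \F_q$, and evaluating it at $\alpha$. Since the map $X \mapsto X^{[i]}$ is just Frobenius, we get $f(\alpha) = \sum_{i=0}^{m-1} f_i \alpha^{[i]}$.

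Now the hypothesis that $\alpha$ generates a normal basis means precisely that $\{\alpha, \alpha^{[1]}, \ldots, \alpha^{[m-1]}\}$ is an $\F_q$-basis of $\F_{q^m}$. So the assignment
\[ (f_0, f_1, \ldots, f_{m-1}) \;\longmapsto\; \sum_{i=0}^{m-1} f_i \alpha^{[i]} \]
is an $\F_q$-linear isomorphism $\F_q^m \to \F_{q^m}$. In particular, for the given $y \in \F_{q^m}$ there is a unique tuple $(f_0,\ldots,f_{m-1}) \in \F_q^m$ whose image is $y$, and this uniquely specifies a linearized polynomial $f \in \F_q[X]$ of $q$-degree at most $m-1$ with $f(\alpha) = y$.

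There is no real obstacle here: the entire content of the lemma is the translation between ``evaluating a linearized polynomial with $\F_q$-coefficients at $\alpha$'' and ``taking an $\F_q$-linear combination of the Frobenius orbit of $\alpha$.'' The only subtlety worth flagging is that the coefficients must lie in the \emph{base field} $\F_q$ (not $\F_{q^m}$) for this counting argument to produce exactly $q^m$ possible polynomials matching the $q^m$ possible values of $y$; if the coefficients were allowed in $\F_{q^m}$, existence would remain but uniqueness would fail.
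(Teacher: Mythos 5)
Your proof is correct and matches the paper's argument: both observe that $f(\alpha)=\sum_i f_i\alpha^{[i]}$ and that the normal basis hypothesis gives a unique $\F_q$-coordinate decomposition of $y$ in the basis $\{\alpha^{[i]}\}_{i=0}^{m-1}$, hence a unique $f$. The only cosmetic difference is that you phrase this as an explicit $\F_q$-linear isomorphism $\F_q^m\to\F_{q^m}$, while the paper just invokes the unique decomposition directly.
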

\begin{proof} As $\alpha$ generates a normal basis, there is a unique decomposition $y=\sum_{i=0}^{m-1} u_i\alpha^{[i]}$ for $u_i\in\F_q$. In particular, $f(\alpha)=y$ if and only if $f(X) =\sum_{i=0}^{m-1} u_i X^{[i]}$. 
\end{proof}

Suppose that no deletions have occurred, and fix an index $i$. Let $W_i$ be the projection of the received subspace on $\gen{(\alpha_i, \F_{q^m})}$. Pick a basis for $W_i$ of the form $\{(\alpha_i, y_{ij})\}_{j=1}^{\dim W_i}$ (note that this is possible when there are no deletions). 

By Lemma~\ref{weak}, for each $j$, let $f_{ij} \in \F_q[X]$ be the unique linearized polynomial of degree at most $m-1$ with $f_{ij}(\alpha_i) = y_{ij}$. 

\begin{lemma}
\label{manufacture}
 If $(\alpha_i, f(\alpha_i))\in W_i$, then 
\[(\alpha_i, f(\gamma\alpha_i), \dotsc, f(\gamma^{[s-1]}\alpha_i))\in \spam\{(\alpha_i,  f_{ij}(\gamma\alpha_i),\dotsc, f_{ij}(\gamma^{[s-1]}\alpha_i))\}_{j=1}^{\dim W_i}.\]
\end{lemma}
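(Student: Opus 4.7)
The plan is to show that the scalar combination expressing $(\alpha_i, f(\alpha_i))$ in terms of the basis $\{(\alpha_i, y_{ij})\}_j$ of $W_i$ actually lifts to an equality of polynomials $f(X) = \sum_j \lambda_j f_{ij}(X)$, after which the desired membership statement follows by evaluating at the shifted points $\gamma^{[t]}\alpha_i$.

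First I would use the assumption $(\alpha_i, f(\alpha_i))\in W_i$ to write
\[(\alpha_i, f(\alpha_i)) = \sum_{j=1}^{\dim W_i} \lambda_j (\alpha_i, y_{ij})\]
for some $\lambda_j \in \F_q$. Reading off the two coordinates gives $\sum_j \lambda_j = 1$ (from the first coordinate, since all entries share the value $\alpha_i$) and $f(\alpha_i) = \sum_j \lambda_j y_{ij} = \sum_j \lambda_j f_{ij}(\alpha_i)$ (using the defining property $f_{ij}(\alpha_i)=y_{ij}$).

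Next I would invoke Lemma~\ref{weak} to promote this evaluation identity to a polynomial identity. Both $f$ (of $q$-degree at most $k-1\le m-1$) and $\sum_j \lambda_j f_{ij}$ (a $\F_q$-linear combination of linearized polynomials of $q$-degree at most $m-1$, hence itself of $q$-degree at most $m-1$) are linearized polynomials in $\F_q[X]$ of $q$-degree at most $m-1$ that take the same value at $\alpha_i$. Lemma~\ref{weak} asserts that such a polynomial is unique, so $f(X) = \sum_j \lambda_j f_{ij}(X)$ as polynomials.

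Finally, evaluating this polynomial identity at the points $\gamma \alpha_i, \gamma^{[1]}\alpha_i, \dotsc, \gamma^{[s-1]}\alpha_i$ yields $f(\gamma^{[t]}\alpha_i) = \sum_j \lambda_j f_{ij}(\gamma^{[t]}\alpha_i)$ for each $0\le t\le s-1$. Combining with $\alpha_i = \sum_j \lambda_j \alpha_i$ from the first coordinate, we get
\[(\alpha_i, f(\gamma\alpha_i), \dotsc, f(\gamma^{[s-1]}\alpha_i)) = \sum_{j=1}^{\dim W_i} \lambda_j \bigl(\alpha_i, f_{ij}(\gamma\alpha_i), \dotsc, f_{ij}(\gamma^{[s-1]}\alpha_i)\bigr),\]
which places the left-hand vector in the claimed span. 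The only subtle point is ensuring that the uniqueness hypothesis of Lemma~\ref{weak} actually applies — i.e., that both sides of the desired polynomial identity have $q$-degree at most $m-1$ and coefficients in $\F_q$ — which is guaranteed by the $k\le m$ assumption on the message $q$-degree and by the fact that we chose the $\lambda_j$ from $\F_q$ (since $W_i$ is an $\F_q$-subspace).
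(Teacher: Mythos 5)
Your proof follows essentially the same route as the paper's: decompose $(\alpha_i, f(\alpha_i))$ as $\sum_j \lambda_j(\alpha_i, y_{ij})$ in the given basis of $W_i$, use Lemma~\ref{weak} to lift the scalar identity $f(\alpha_i)=\sum_j\lambda_j f_{ij}(\alpha_i)$ to the polynomial identity $f=\sum_j\lambda_j f_{ij}$, and then evaluate at the shifted points $\gamma^{[d]}\alpha_i$. Your explicit observation that $\sum_j\lambda_j=1$ (needed so the first coordinates agree in the final vector equation) is a small, correct clarification that the paper leaves implicit.
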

\begin{proof} If $(\alpha_i, f(\alpha_i))\in W_i$, then for some (unknown) $\lambda_j\in \F_q$, $(\alpha_i, f(\alpha_i)) = \sum_{i=1}^{\dim W_i} \lambda_j \cdot (\alpha_i, y_{ij})$. 

In particular, $f(\alpha_i) = \sum_{i=1}^{\dim W_i} \lambda_i y_{ij} = \sum_{i=1}^{\dim W_i}\lambda_j f_{ij}(\alpha_i)$. 

Thus the polynomial $\hat{f}(X) :=\sum_{i=1}^{\dim W_i}\lambda_j f_{ij} (X)\in \F_q[X]$ satisfies $\hat{f}(\alpha_i) = f(\alpha_i)$. By Lemma~\ref{weak}, this polynomial is unique, and so 
\[f(X) = \sum_{i=1}^{\dim W_i}\lambda_j f_{ij} (X).\]

Therefore, for every $d$, $f(\gamma^{[d]}\alpha_i) = \sum_{i=1}^{\dim W_i}\lambda_j f_{ij}(\gamma^{[d]}\alpha_i)$. Thus 
\[(\alpha_i, f(\gamma\alpha_i), \dotsc, f(\gamma^{[s-1]}\alpha_i))=\sum_{j=1}^{\dim W_i} \lambda_j(\alpha_i,  f_{ij}(\gamma\alpha_i),\dotsc, f_{ij}(\gamma^{[s-1]}\alpha_i))\]
and the lemma follows. 
\end{proof}

\medskip

For each $i$, we have produced a subspace containing the vector \[(\alpha_i, f(\gamma\alpha_i), f(\gamma^{[1]}\alpha_i),\dotsc, f(\gamma^{[s-1]}\alpha_i)).\]

In particular, we may now apply the decoding algorithm of Section \ref{sec-folded-dec}. Therefore, we have

\begin{thm} The restricted KK code which encodes $k$ symbols over $\F_q$ by an $\ell$-dimensional subspace can be list-decoded with list size $q^{s-1}$ from $t$ insertions provided 
\[t<s(\ell-k+1).\]
\end{thm}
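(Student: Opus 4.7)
Plan: The strategy is to reduce the restricted KK decoding problem to a zero-deletion instance of the linearized folded RS decoding problem from Section~\ref{sec-folded-dec}, and then invoke the decoder of Theorem~\ref{thm:main-lfrs}.

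Starting from the received subspace $U$, I would manufacture a folded subspace $U'$ as follows. For each $i \in [\ell]$, compute $W_i$, the projection of $U$ onto $\gen{(\alpha_i, \F_{q^m})}$, and pick a basis $\{(\alpha_i, y_{ij})\}_j$ of $W_i$, which is possible because no deletions have occurred. Applying Lemma~\ref{weak} with $\alpha = \alpha_i$ recovers, for each $(i,j)$, the unique linearized polynomial $f_{ij} \in \F_q[X]$ of $q$-degree less than $m$ satisfying $f_{ij}(\alpha_i) = y_{ij}$. Set
\[ U' \;=\; \spam\bigl\{\bigl(\alpha_i,\, f_{ij}(\gamma\alpha_i),\, \dotsc,\, f_{ij}(\gamma^{[s-1]}\alpha_i)\bigr) : i \in [\ell],\, j \in [\dim W_i]\bigr\}. \]
By Lemma~\ref{manufacture}, for every message polynomial $f$ whose encoding lies inside $U$, each folded vector $(\alpha_i, f(\gamma\alpha_i), \dotsc, f(\gamma^{[s-1]}\alpha_i))$ lies in the span of the $i$-th block of manufactured generators, so the full folded codeword $V' = \gen{(\alpha_i, f(\gamma\alpha_i), \dotsc, f(\gamma^{[s-1]}\alpha_i))}_i$ is contained in $U'$. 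Equivalently, $U'$ is a received subspace of the folded code suffering zero deletions.

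I would then feed $U'$ into the Section~\ref{sec-folded-dec} decoder: interpolate a nonzero $Q(X, Y_1, \dotsc, Y_s) = A_0(X) + A_1(Y_1) + \dotsb + A_s(Y_s)$ with the prescribed $q$-degree bounds that vanishes on $U'$; use the containment $V' \subseteq U'$ together with the $q$-degree-vs-linear-independence argument of Lemma~\ref{lem:decoding-cond} to force $\hat{Q}(X) := Q(X, f(\gamma X), \dotsc, f(\gamma^{[s-1]}X)) \equiv 0$; and finally run the coefficient-by-coefficient extraction from Section~\ref{sec-folded-dec} to obtain an affine $\F_q$-subspace of candidate messages of dimension at most $s-1$, giving list size $q^{s-1}$.

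The main step requiring care is verifying the agreement condition, namely $\dim U' < \ell + s(\ell - k + 1)$, and translating it into the stated bound $t < s(\ell - k + 1)$ on the physical number of insertions in $U$. The naive count $\sum_i \dim W_i$ is far too loose; the key observation is that the manufactured generators inherit linear relations from (i) the first-coordinate constraints $\sum_j c_{ij} = 0$ per block $i$, and (ii) the $\F_q$-linearity of the map $y \mapsto f_y$ guaranteed by Lemma~\ref{weak}. Carefully tracking these relations to show that $\dim U'$ grows only with $t$, rather than with the much larger $\sum_i \dim W_i$, is the main technical obstacle and where the bulk of the remaining effort lies.
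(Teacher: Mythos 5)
Your high-level plan matches the paper's exactly: project $U$ onto each $\alpha_i$-block to get $W_i$, use Lemma~\ref{weak} to reconstruct the polynomials $f_{ij}$, invoke Lemma~\ref{manufacture} to place the folded codeword vector $(\alpha_i, f(\gamma\alpha_i), \dotsc, f(\gamma^{[s-1]}\alpha_i))$ inside $\spam\{(\alpha_i, f_{ij}(\gamma\alpha_i),\dotsc)\}_j$, and then hand the manufactured space to the Section~\ref{sec-folded-dec} decoder. The paper's own write-up is in fact just as terse as yours --- after stating Lemma~\ref{manufacture} it simply says ``we may now apply the decoding algorithm of Section~\ref{sec-folded-dec}'' and asserts the theorem --- so on the architecture you have faithfully reproduced the intended argument.

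The difficulty you flag in your last paragraph, however, is a genuine and unresolved one, and your ``key observation'' does not close it. For the folded decoder to give $t < s(\ell - k + 1)$ one needs the manufactured received space $U' = \sum_i S_i$ to have dimension at most $\ell + t$, so that $D$ can be set to $\lfloor (\ell+t-k+1)/(s+1)\rfloor$ and the agreement bound $\ell > D + k - 1$ follows. But each $W_i$ contains the whole $t$-dimensional error block, and the linearized-polynomial reconstruction $y \mapsto f_y$ of Lemma~\ref{weak} inverts evaluation \emph{at $\alpha_i$}, hence is a different linear map for each $i$. Consequently, for a single injected vector $(0,z)$ the $\ell$ manufactured vectors $(0, f^{(i)}_z(\gamma\alpha_i), \dotsc)$, $i=1,\ldots,\ell$, are generically linearly independent, and $\dim U'$ can grow like $\ell + \ell t$ rather than $\ell + t$. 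The two relations you cite --- the first-coordinate constraint and the $\F_q$-linearity of $y \mapsto f_y$ --- each operate \emph{within} a single block $i$; they give $\dim S_i \le 1 + t$ but say nothing about how $S_1,\dotsc,S_\ell$ overlap, which is exactly where the overcounting arises. (A concrete instantiation with $s=1$, $\ell=2$, $k=1$, $t=1$ over $\F_4$ already produces $\dim U' = 4 > \ell+t = 3$, at which point the naive $D$-setting fails, even though this parameter regime is within the claimed bound.) So your proposal correctly reconstructs the paper's reduction but leaves open precisely the counting step that the stated bound on $t$ rests on; a complete proof must either exhibit the promised collapse of $\dim U'$ to $\ell + t$, or avoid forming $U'$ altogether and interpolate more cleverly.
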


\begin{rmk} When $s=1$, we would apply the results of Section \ref{sec-folded-dec} directly, and this algorithm reduces to the algorithm of \cite{KK} for uniquely decoding KK codes. 
\end{rmk}

\section{Improving the decoding radius}

In this section, we show that the variant of KK codes proposed in \cite{MV} can also be list-decoded in our setting, with improved parameters. Although we cannot handle deletions, this code can achieve constant rate. Let us first recall the code. 

For a chosen parameter $\ell$ dividing $q-1$, the equation $x^\ell=1$ has $\ell$ distinct solutions $e_1=1,e_2,\dotsc, e_\ell$ in $\F_q$. Let $\beta\in\F_{q^{ml}}$ generate a normal basis for $\F_{q^{m\ell}}$. Then for $i=1,2,\dotsc, \ell$, define 
\[\alpha_i=\beta + e_i\beta^{[m]} + e_i^2\beta^{[2m]} + \dotsb + e_i^{\ell-1} \beta^{[m(\ell-1)]}.\]

The following algebraic facts about this construction are established in \cite{MV}: 
\begin{itemize}
\item The set $\{\alpha_i^{[j]}\mid 1\leq i\leq \ell,0\leq j\leq m-1\}$ is a basis for $\F_{q^{m\ell}}$. In particular, the elements of the set are linearly independent. 
\item If $f$ is a linearized polynomial with coefficients from $\F_q$, then for every $i$, $f(\alpha_i)/\alpha_i\in \F_{q^m}$. 
\end{itemize}

For $f$ a linearized polynomial over $\F_q$, let $v_1=(\alpha_1,f(\alpha_1))$ and $v_i=(\alpha_i, f(\alpha_i)/\alpha_i)$ for $i>1$. Then the encoding of $f$ will be the span of the $v_i$'s. By the previous properties, this encoding lies in the ambient space $W=\gen{\alpha_1,\dotsc, \alpha_\ell}\oplus \F_{q^m}$ of dimension $\ell+m$. 
\medskip

Suppose the encoding of $f$ has been transmitted and a subspace $U$ of dimension $\ell+t$ is received, differing by $t$ insertions and no deletions. The decoder will fix $\gamma\in \F_{q^{m\ell}}$ which generates a normal basis for $\F_{q^{m\ell}}$. 

As before, for each $\alpha_i$, we may project $U$ onto an associated subspace $W_i$. Then we can give a basis for each $W_i$ as $\{(\alpha_i, y_{ij}/\alpha_i)\}_{j=1}^{\dim W_i}$ for $i>1$ and as $\{(\alpha_1, y_{1j}\}_{j=1}^{\dim W_1}$ for $i=1$. 

The following is proved as Lemma 31 in \cite{MV}:
\begin{lemma} For each $i,j$, $y_{ij}$ can be uniquely written as a linear combination of $\alpha_i,\alpha_i^{[1]},\dotsc, \alpha_i^{[m-1]}$ over $\F_q$. 
\end{lemma}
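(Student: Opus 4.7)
My plan is to show that $y_{ij}$ is constrained to lie in the $\F_q$-subspace $V_i := \spam_{\F_q}\{\alpha_i, \alpha_i^{[1]}, \ldots, \alpha_i^{[m-1]}\}$ by identifying $V_i$ with $\alpha_i \cdot \F_{q^m}$, and then reading off the required inclusion from the ambient-space structure.

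First, I would apply the second algebraic fact stated above---that $f(\alpha_i)/\alpha_i \in \F_{q^m}$ for every linearized $f$ with coefficients in $\F_q$---to the particular polynomial $f(X) = X^{[k]}$ for each $k \ge 0$. This yields $\alpha_i^{[k]}/\alpha_i \in \F_{q^m}$, so each $\alpha_i^{[k]}$ lies in $\alpha_i \cdot \F_{q^m}$, and therefore $V_i \subseteq \alpha_i \cdot \F_{q^m}$. The first algebraic fact tells us that the $\alpha_i^{[k]}$ for fixed $i$ and $0 \le k < m$ are $\F_q$-linearly independent, so $\dim_{\F_q} V_i = m$; since multiplication by the nonzero element $\alpha_i$ is an $\F_q$-linear isomorphism, $\alpha_i \cdot \F_{q^m}$ also has $\F_q$-dimension $m$, and hence $V_i = \alpha_i \cdot \F_{q^m}$.

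Next, I would translate this to the statement of the lemma using the ambient-space decomposition. Since $W = \gen{\alpha_1, \ldots, \alpha_\ell} \oplus \F_{q^m}$, every vector of the received subspace has its second component lying in $\F_{q^m}$. For $i > 1$, the chosen basis of $W_i$ has entries of the form $(\alpha_i, y_{ij}/\alpha_i)$, so $y_{ij}/\alpha_i \in \F_{q^m}$, i.e., $y_{ij} \in \alpha_i \cdot \F_{q^m} = V_i$. For $i = 1$, observe that $\alpha_1 = \sum_{j=0}^{\ell-1} \beta^{[mj]}$ is the trace of $\beta$ from $\F_{q^{m\ell}}$ to $\F_{q^m}$, so $\alpha_1 \in \F_{q^m}$ and $V_1 = \alpha_1 \F_{q^m} = \F_{q^m}$; the basis vector is $(\alpha_1, y_{1j})$ with $y_{1j} \in \F_{q^m} = V_1$. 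Uniqueness of the expansion is immediate from the $\F_q$-linear independence of $\alpha_i, \alpha_i^{[1]}, \ldots, \alpha_i^{[m-1]}$.

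The main obstacle is really just the first step: spotting that $f(X) = X^{[k]}$ is a valid choice to feed into the second algebraic fact, which converts the quotient condition $f(\alpha_i)/\alpha_i \in \F_{q^m}$ into the concrete inclusion $V_i \subseteq \alpha_i \F_{q^m}$. Once the two subspaces are identified, the lemma reduces to unpacking the ambient-space definition and a dimension count.
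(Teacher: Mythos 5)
The paper does not give its own proof of this lemma; it simply cites Lemma~31 of \cite{MV}, so there is no in-paper argument to compare against. I can therefore only assess your proof on its own terms, and it is correct.

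Your key observation---applying the stated fact $f(\alpha_i)/\alpha_i \in \F_{q^m}$ (for $f$ with $\F_q$-coefficients) to the monomials $f(X) = X^{[k]}$---cleanly yields $\alpha_i^{[k]} \in \alpha_i \F_{q^m}$ for all $k$, so that $V_i = \spam_{\F_q}\{\alpha_i, \alpha_i^{[1]}, \dots, \alpha_i^{[m-1]}\} \subseteq \alpha_i\F_{q^m}$. The dimension count ($\dim_{\F_q} V_i = m$ by linear independence from the first fact, $\dim_{\F_q} \alpha_i\F_{q^m} = m$ since multiplication by $\alpha_i \neq 0$ is an $\F_q$-linear isomorphism) then upgrades the inclusion to equality. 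From there the ambient-space structure $W = \gen{\alpha_1,\dots,\alpha_\ell} \oplus \F_{q^m}$ immediately forces $y_{ij}/\alpha_i \in \F_{q^m}$ for $i>1$, i.e.\ $y_{ij} \in V_i$; your handling of the $i=1$ case via the observation that $\alpha_1 = \sum_{j=0}^{\ell-1}\beta^{[mj]}$ is the trace of $\beta$ down to $\F_{q^m}$ (so $\alpha_1 \in \F_{q^m}^*$ and $V_1 = \F_{q^m}$) is also correct and needed, since the basis of $W_1$ is written without the division by $\alpha_1$. Uniqueness follows from linear independence, as you say. The argument is complete.
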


This is the analogue of Lemma~\ref{weak} for this setting, so as before we may define $f_{ij}(X)$ to be the unique linearized polynomial of degree at most $m-1$ with $f_{ij}(\alpha_i)=y_{ij}$. 

Then as in Lemma~\ref{manufacture}, for every $i$, we can find a subspace containing $(\alpha_i,f(\gamma\alpha_i),\dotsc, f(\gamma^{[m\ell-1]}\alpha_i))$. That is, 
\begin{equation}
\label{span}
(\alpha_i, f(\gamma\alpha_i), \dotsc, f(\gamma^{[m\ell-1]}\alpha_i))\in \spam\{(\alpha_i,  f_{ij}(\gamma\alpha_i),\dotsc, f_{ij}(\gamma^{[m\ell-1]}\alpha_i))\}_{j=1}^{\dim W_i}.
\end{equation}

The following lemma is proved in Appendix~\ref{app:omitted-proof}.
 
\begin{lemma} 
\label{lem:technical-alg}
For $0\leq n<m$ and any $1\leq s<m\ell$, 
\[\left(\alpha_i^{[n]}, f(\gamma\alpha_i^{[n]}), \dotsc, f(\gamma^{[s-1]}\alpha_i^{[n]})\right)\in \spam \left\{\left(\alpha_i^{[n]}, f_{ij}(\gamma^{[m\ell-n]}\alpha_i)^{[n]}, \dotsc, f_{ij}(\gamma^{[m\ell-n+s-1]}\alpha_i)^{[n]}\right)\right\}.\]
\end{lemma}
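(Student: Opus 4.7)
The plan is to deduce Lemma~\ref{lem:technical-alg} from equation~(\ref{span}) --- the $n=0$ version established just above --- by applying the $n$th Frobenius power componentwise and then projecting onto $s$ appropriately chosen coordinates. Two small algebraic facts will drive everything: (i) for any linearized polynomial $g(X) = \sum_i c_i X^{[i]}$ with coefficients $c_i \in \F_q$, Frobenius commutes with evaluation in the sense that $g(Y)^{[n]} = g(Y^{[n]})$ (since $c_i^{[n]} = c_i$); and (ii) because $\gamma \in \F_{q^{m\ell}}$ satisfies $\gamma^{[m\ell]} = \gamma$, exponents on $\gamma$ are naturally read modulo $m\ell$.

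First, by equation~(\ref{span}) there exist scalars $\lambda_j \in \F_q$ expressing the LHS vector as $\sum_j \lambda_j$ times the listed spanning vectors. Since Frobenius is $\F_q$-linear and fixes each $\lambda_j$, applying $(\cdot)^{[n]}$ coordinatewise preserves this combination:
\[(\alpha_i^{[n]}, f(\gamma\alpha_i)^{[n]}, \dotsc, f(\gamma^{[m\ell-1]}\alpha_i)^{[n]}) = \sum_j \lambda_j (\alpha_i^{[n]}, f_{ij}(\gamma\alpha_i)^{[n]}, \dotsc, f_{ij}(\gamma^{[m\ell-1]}\alpha_i)^{[n]}).\]
By fact (i), the $k$th entry on the LHS (for $k \ge 1$) rewrites as $f(\gamma^{[k-1+n]}\alpha_i^{[n]})$, and similarly on the RHS with $f$ replaced by $f_{ij}$.

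Next I will project both sides onto the $s$ coordinates whose $\gamma$-exponent, reduced modulo $m\ell$ via (ii), lies in $\{0, 1, \dotsc, s-1\}$; the bound $s < m\ell$ ensures these are $s$ distinct positions. The resulting identity is
\[(\alpha_i^{[n]}, f(\gamma\alpha_i^{[n]}), \dotsc, f(\gamma^{[s-1]}\alpha_i^{[n]})) = \sum_j \lambda_j (\alpha_i^{[n]}, f_{ij}(\gamma\alpha_i^{[n]}), \dotsc, f_{ij}(\gamma^{[s-1]}\alpha_i^{[n]})),\]
which is the desired containment in a slightly different form. To match the stated form literally, I will rewrite each right-hand entry using $f_{ij}(\gamma^{[r]}\alpha_i^{[n]}) = f_{ij}((\gamma^{[m\ell-n+r]}\alpha_i)^{[n]}) = f_{ij}(\gamma^{[m\ell-n+r]}\alpha_i)^{[n]}$, which is valid by facts (i) and (ii).

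The only subtlety is bookkeeping: when $s > n$, the $s$ selected coordinates straddle the wrap-around between $\gamma^{[m\ell-1]}$ and $\gamma^{[m\ell]} = \gamma$, so they do not form a contiguous run of positions in the pre-projection identity. Since~(\ref{span}) holds simultaneously on all $m\ell + 1$ coordinates, however, projecting onto any $s$-subset preserves the identity, so the wrap-around is purely an indexing matter rather than a mathematical obstacle.
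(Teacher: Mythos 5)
Your proof is correct and follows essentially the same approach as the paper: start from equation~(\ref{span}), apply the $n$th Frobenius power using $\F_q$-linearity and the commutation $g(X)^{[n]} = g(X^{[n]})$ for $g\in\F_q[X]$, and reduce $\gamma$-exponents modulo $m\ell$. The paper works coordinate-by-coordinate rather than applying Frobenius to the whole $(m\ell+1)$-tuple and then projecting, but that is purely a presentational difference; your remark on the wrap-around is a fair clarification of the modular reindexing that the paper handles implicitly.
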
 

We would then like to interpolate a nonzero polynomial $Q(X, Y_1,\dotsc, Y_s)$ of the form 
\[Q(X, Y_1,\dotsc, Y_s) = A_0(X) + A_1(Y_1) + \dotsb + A_s(Y_s)\]
subject to the conditions 
\[Q(\alpha_i^{[n]}, f_{ij}(\gamma^{[m\ell-n]}\alpha_i)^{[n]}, \dotsc, f_{ij}(\gamma^{[m\ell-n+s-1]}\alpha_i)^{[n]})=0 \qquad 1\leq i\leq \ell,\;1\leq j\leq \dim W_i,\; 0\leq n<m.\]

The number of conditions is $m(\ell+t)$, and the number of degrees of freedom for our interpolation is $(D+1)(s+1) + k-1$. Therefore, in order to guarantee the existence of a nonzero $Q$, we will require that $D+1> \frac{m(\ell+t)-k+1}{s+1}$, which we will satisfy by taking $D = \lfloor \frac{m(\ell+t)-k+1}{s+1} \rfloor$.

Then by the interpolation conditions, $Q\bigl(\alpha_i^{[j]}, f(\gamma\alpha_i^{[j]}),\dotsc, f(\gamma^{[s-1]}\alpha_i^{[j]})\bigr)=0$ for $1\leq i\leq\ell$ and $0\leq j<m$. Since the $\alpha_i^{[j]}$ are all linearly independent, the polynomial 
\[\hat{Q}(X)=Q\bigl(X, f(\gamma X), \dotsc, f(\gamma^{[s-1]}X)\bigr)\]
is zero whenever 
\[m\ell>\floor{\frac{m(\ell+t)-k+1}{s+1}} + k-1,\]
or when 
\[t<s\ell -s\left(\frac{k-1}{m}\right).\]

We can then solve the linear system as before for a list size of $q^{s-1}$.

\begin{rmk}[Comparison with the parallel result in \cite{MV}]
The decoding algorithm for this code in \cite{MV} is based on ``manufacturing" the evaluations of $f(f(X))$ (and higher order compositions of $f$ with itself) at the $\alpha_i$'s based on the received subspace. Our approach is to manufacture the evaluations of the shifted polynomials $f(\gamma^{[i]} X)$ for $i=0,1,\dots,s-1$ at the $\alpha_i$'s. The advantage of our approach is that the $q$-degree of $f(\gamma^{[i]} X)$ is the same as that of $f(X)$ whereas composition increases the $q$-degree.

This increase in $q$-degree in the case of \cite{MV} restricts the parameters so that the rate $R$ satisfies $R<1/q^s$. We have no such restrictions (aside from the natural ones imposed by the requirement $k\leq \ell m$). Thus our decoding algorithm works for a wider range of rates.

Moreover, the list-decoding condition in \cite{MV} in order to achieve a list size of $q^s$ is  
\[t<s\ell + \ell - \left(\frac{q^{s+1}-1}{q-1}\right)\left(\frac{k-1}{m}\right),\]
compared to our condition of 
\[t<s\ell + \ell -(s+1)\left(\frac{k-1}{m}\right).\]

Note that $s$ should be thought of as constant, in order to allow for pruning of the $q^s$-sized list in polynomial time. Since the analysis required $\ell$ to divide $q-1$, $q$ must grow with the parameter $\ell$. 
\end{rmk}

\section{Open questions}

There are several open questions raised by our work, with some of the central ones being:
\begin{itemize}
\itemsep=0ex
\item Can one list-decode subspace codes in the presence of deletions with constant rate?
\item In particular, can the KK code be list-decoded? Note that the results so far only handle a subcode of the KK code (where the coefficients are restricted to belong to the base field $\F_q$).
\item Can one prove a Johnson bound for list decoding subspace codes on the operator channel?
\end{itemize}

\bibliographystyle{abbrv}

\appendix
\section{Proof of Lemma \ref{lem:technical-alg}}
\label{app:omitted-proof}

Let us recall the lemma for easy reference.

\medskip 
\noindent {\bf Lemma \ref{lem:technical-alg}.} {\it For $0\leq n<m$ and any $1\leq s<m\ell$, 
\[\left(\alpha_i^{[n]}, f(\gamma\alpha_i^{[n]}), \dotsc, f(\gamma^{[s-1]}\alpha_i^{[n]})\right)\in \spam \left\{\left(\alpha_i^{[n]}, f_{ij}(\gamma^{[m\ell-n]}\alpha_i)^{[n]}, \dotsc, f_{ij}(\gamma^{[m\ell-n+s-1]}\alpha_i)^{[n]}\right)\right\}.\]
}

\begin{proof}
For $f\in\F_q[X]$, $f(X^{[i]})=f(X)^{[i]}$. In particular, $f(\gamma^{[j]}\alpha_i^{[n]}) = \bigl(f(\gamma^{[m\ell + j-n]}\alpha_i)\bigr)^{[n]}$. 

By \eqref{span}, there exist $\lambda_j\in \F_q$ such that 

\[(\alpha_i, f(\gamma\alpha_i), \dotsc, f(\gamma^{[m\ell-1]}\alpha_i)) = \sum_{j=1}^{\dim W_i} \lambda_j\cdot\bigl(\alpha_i,  f_{ij}(\gamma\alpha_i),\dotsc, f_{ij}(\gamma^{[m\ell-1]}\alpha_i)\bigr).\]

Then for $1\leq n< m$, and any $a<m\ell$, 
\begin{align*}
\sum_{j=1}^{\dim W_i} \lambda_i f_{ij}(\gamma^{[ml-n+a]}\alpha_i)^{[n]} &= \left(\sum_{j=1}^{\dim W_i}\lambda_i f_{ij}(\gamma^{[m\ell-n+a]}\alpha_i)\right)^{[n]}\\
&= \left(f(\gamma^{[m\ell-n+a]}\alpha_i)\right)^{[n]}\\
&= f(\gamma^{[m\ell+a]}\alpha_i^{[n]}) = f(\gamma^{[a]}\alpha_i^{[n]}).
\end{align*}

Therefore
\[\left(\alpha_i^{[n]}, f(\gamma\alpha_i^{[n]}), \dotsc, f(\gamma^{[s-1]}\alpha_i^{[n]})\right)=\sum_{j=1}^{\dim W_i} \lambda_j\left(\alpha_i^{[n]}, f_{ij}(\gamma^{[m\ell-n]}\alpha_i)^{[n]}, \dotsc, f_{ij}(\gamma^{[m\ell-n+s-1]}\alpha_i)^{[n]}\right),\]
as desired. 

\end{proof}

\end{document}